\documentclass[conference]{IEEEtran}

\usepackage{amsmath,amssymb,amsthm,mathtools}
\usepackage{bm,bbm}
\usepackage{graphicx}
\usepackage{booktabs}
\usepackage{xcolor}
\let\labelindent\relax
\usepackage{enumitem}
\usepackage[hidelinks]{hyperref}
\usepackage[capitalize,noabbrev]{cleveref}
\usepackage{microtype}

\newtheorem{proposition}{Proposition}

\theoremstyle{definition}

\newtheorem{remark}{Remark}
\newtheorem{example}{Example}

\newcommand{\R}{\mathbb{R}}
\newcommand{\C}{\mathbb{C}}
\newcommand{\E}{\mathbb{E}}
\newcommand{\herm}{\mathsf{H}}
\newcommand{\CN}{\mathcal{CN}}
\newcommand{\Pa}{\mathrm{Pa}}
\DeclareMathOperator{\tr}{tr}

\newcommand{\MI}{I}
\newcommand{\param}{\bm{\eta}}
\newcommand{\rateset}{\mathcal{R}^{\mathrm{ach}}}
\newcommand{\indset}{\mathcal{S}}
\newcommand{\feasset}{\mathcal{F}}
\newcommand{\sigtau}{\sigma_{\tau}}
\newcommand{\rhotau}{\widehat{\rho}_{\tau}}

\title{Differentiable Conditional Mutual Information for\\
Multi-Terminal Linear Gaussian Wireless Networks}

\author{%
\IEEEauthorblockN{Tadashi~Wadayama and Siqi~Na}\\
\IEEEauthorblockA{Department of Computer Science,
Nagoya Institute of Technology, Nagoya 466-8555, Japan\\
Email: wadayama@nitech.ac.jp}%
}

\begin{document}
\maketitle

\begin{abstract}
The rate regions of multi-terminal Gaussian channels
(multiple-access, broadcast, interference, relay) are delimited by
conditional mutual informations
$I(V_A;V_B\,|\,V_C)$ among groups of input and output
nodes; bringing such channels under differentiable physical-layer
design therefore hinges on evaluating any such conditional MI, and
its gradient, on a unified computation graph. Modeling the network
as a linear Gaussian directed acyclic graph (Gaussian-DAG), we
obtain $I(V_A;V_B\,|\,V_C)$ in closed form: from the
node-pair covariances produced by one K-recursion forward pass, it
is a log-determinant difference of two sub-block Schur complements
of the support covariance. The construction is built entirely from
automatic-differentiation (AD) primitives, so any differentiable
function of finitely many conditional MIs is end-to-end
differentiable in the design parameters; this broad class includes
linear objectives (weighted sum-rate, secrecy), the rate
functions of standard multi-terminal rate regions, and non-linear
composites of these. A single reverse-mode AD sweep yields the
Wirtinger gradient with respect to all controllable factors at once,
so any such objective can be handled by projected gradient
iterations without problem-specific gradient derivation.
We demonstrate the framework on three
experiments: rate-region maximization for a two-user MIMO
multiple-access channel, secure precoding on a MIMO wiretap channel
(extended to a Lagrangian sweep that traces the empirical
leakage-rate Pareto curve, with the classical secrecy rate as the
$\lambda\!=\!1$ point), and the same rate-region objective applied
to a larger multi-hop multiple-access network.
\end{abstract}

\begin{IEEEkeywords}
Conditional mutual information, linear Gaussian DAG, K-recursion,
Schur complement, multi-terminal rate region, differentiable
programming, automatic differentiation, projected gradient method.
\end{IEEEkeywords}

\section{Introduction}
\label{sec:intro}


Mutual information (MI) has long served as a central design objective
for vector Gaussian communication channels: from Telatar's waterfilling
solution for the single-link MIMO capacity~\cite{telatar1999} to the
Palomar--Verd\'u closed-form gradient of MI with respect to the
transmit precoder~\cite{palomar2006}, which underpins a wide class of
MIMO precoding, beamforming, and sensing designs. A common feature of
these classical results is that each new topology (a multi-hop
amplify-and-forward relay, a diamond network, a precoded broadcast, a
cell-free MIMO front-end) demands its own analytical derivation of
$\nabla_{\bm{F}}\MI$, with no topology-agnostic mechanism for these
gradient computations. This
is increasingly at odds with how modern wireless systems are designed:
the physical layer is a composition of cascaded linear-Gaussian stages
whose parameters one would like to tune jointly, end to end, by
gradient descent in a single computation graph, a
\emph{differentiable-programming} workflow analogous to that used in
deep learning.

A natural way to bring MI-based design under this umbrella is the
\emph{linear Gaussian directed acyclic graph} (Gaussian-DAG)
framework of~\cite{wadayama2026arxiv}: every node is a circular
complex Gaussian vector and every edge a linear transformation, and
the node-pair covariances $\bm{K}_{jk}\!=\!\E[V_jV_k^{\herm}]$ are
computed by a topological-order recursion (the \emph{K-recursion})
using only matrix products, sums, and Hermitian transposes, the
standard primitives of complex automatic-differentiation (AD) engines.
The local-to-global covariance map underlying the
K-recursion is itself classical~\cite{shachter1989,geiger1994,sullivant2010};
\cite{wadayama2026arxiv} contributes its formulation as a single
differentiable forward operator on which AD-based optimization can be
built. In that framework, the end-to-end log-determinant MI between
a single input node and a single output node is a differentiable
scalar functional of the controllable factors, and reverse-mode AD
delivers exact Wirtinger gradients for arbitrary topologies in a
single backward pass.

Multi-terminal design objectives, however, are not of this single,
unconditional form: the two-user MAC pentagon is bounded by the
conditional MIs $\MI(V_{X_1};V_Y\,|\,V_{X_2})$,
$\MI(V_{X_2};V_Y\,|\,V_{X_1})$, and $\MI(V_{X_1},V_{X_2};V_Y)$; a
secrecy rate is the MI difference $\MI(V_X;V_Y)\!-\!\MI(V_X;V_Z)$; broadcast, interference, and relay rate
functions are signed combinations of such conditional MIs. Bringing
multi-terminal design under differentiable programming therefore
hinges on evaluating \emph{any} conditional mutual information
$\MI(V_A;V_B\,|\,V_C)$ on the DAG through one common,
differentiable computation graph.

For context, three alternative routes to differentiable MI-based
design merit brief comment. Variational and contrastive estimators
(MINE~\cite{mine2018}, InfoNCE~\cite{infonce2018}) approximate MI
through a learned lower bound; their gradients are stochastic,
sample-based quantities tied to that bound. A score-function-based
DAG framework~\cite{wadayama2026dag} avoids that variational gap, in
return for fitting a denoising score network and integrating via
Monte-Carlo. In the linear-Gaussian setting these costs are
eliminated: covariances propagate analytically along the DAG via the
K-recursion of~\cite{wadayama2026arxiv}, and MI gradients on the
design parameters are recovered by a single reverse-mode pass with
neither training nor sampling.

\subsubsection*{Contribution}
This paper extends the differentiable Gaussian-DAG approach from a
single MI to arbitrary conditional mutual information, through a
three-step computational chain.
\emph{(i)}~From the K-blocks produced by one K-recursion forward pass,
the conditional mutual information
$\MI(V_A;V_B\,|\,V_C)$ of \emph{any} disjoint node
subsets $A,B,C$ is obtained in closed form, as a log-determinant
difference of sub-block Schur complements of the support covariance
(\cref{prop:cmi_schur}), a procedure built entirely from
AD-compatible primitives.
\emph{(ii)}~A \emph{conditional-MI objective} is any function of
finitely many conditional mutual informations on the DAG that is
differentiable in those arguments; by the chain rule, every such
objective is end-to-end differentiable in the design parameters.
Familiar instances include weighted sum-rate, fairness, and secrecy
objectives, as well as the rate functions of multi-terminal rate
regions (the multiple-access pentagon, the Gaussian broadcast region
with dirty-paper coding, the Han--Kobayashi inner bound, and
decode-/compress-and-forward relay bounds, once their coding order,
auxiliary-variable structure, and time-sharing choice are fixed).
Non-linear composites, such as the composite sigmoid surrogate of
rate-region outage, also fall within the class.
\emph{(iii)}~A single reverse-mode AD sweep through the shared graph
yields the Wirtinger gradient with respect to all controllable factors
at once, so any such objective is amenable to projected-gradient
optimization with no problem-specific derivation.
The stochastic optimization of the outage surrogate under fading is
left to a companion paper.

\subsubsection*{Paper Organization}
\Cref{sec:preliminaries} reviews the linear Gaussian DAG model and the
K-recursion of~\cite{wadayama2026arxiv}. \Cref{sec:cmi} establishes
the closed-form, differentiable evaluator of arbitrary conditional
mutual information (\cref{prop:cmi_schur}) and its reverse-mode
Wirtinger gradient. \Cref{sec:objectives} introduces conditional-MI
objectives (from linear combinations to multi-terminal rate-region
functions) and their optimization by the projected gradient method.
\Cref{sec:experiments} reports numerical results,
and \cref{sec:conclusion} concludes.

\section{Preliminaries}
\label{sec:preliminaries}

This section reviews the linear Gaussian DAG model and the K-recursion
of~\cite{wadayama2026arxiv}, fixing notation and the K-block primitives
used throughout the paper.

\subsection{Notation}
\label{sec:notation}
Uppercase italic letters (e.g., $X, V_j$) denote random variables or
random vectors, while boldface letters (e.g., $\bm{A}, \bm{\Sigma}$)
denote deterministic vectors and matrices.
$\bm{A}^{\herm}$ denotes the Hermitian (conjugate) transpose;
$\bm{A}^{*}$ denotes the entry-wise complex conjugate;
$\bm{\Sigma}\!\succ\!\bm{0}$ means Hermitian positive definite (HPD);
$\bm{\Sigma}\!\succeq\!\bm{0}$ means Hermitian positive semidefinite
(PSD); $\bm{I}_d$ is the $d\!\times\!d$ identity matrix;
$\tr(\cdot)$, $\det(\cdot)$, $\|\cdot\|_F$ are trace, determinant,
and Frobenius norm; $\E[\cdot]$ and $\Pr[\cdot]$ denote expectation
and probability; $Y\!\sim\!\CN(\bm{\mu},\bm{\Sigma})$ denotes a
circular complex Gaussian random vector with mean $\bm{\mu}$ and
covariance $\bm{\Sigma}$; and
$\sigtau(x)\!\triangleq\!(1+e^{-x/\tau})^{-1}$ is the sigmoid with
temperature $\tau\!>\!0$.

\subsection{Linear Gaussian DAG Model and K-Recursion}
\label{sec:krec_review}
Let $\mathcal{G}\!=\!(\mathcal{V},\mathcal{E})$ be a topologically
ordered DAG with node index set $\mathcal{V}\!=\!\{1,\dots,M\}$.
We adopt the convention that the first $K$ nodes are the user-input
roots, $\mathcal{R}\!=\!\{1,\dots,K\}\!\subsetneq\!\mathcal{V}$, with
$K\!<\!M$ and one root per rate-bearing user, so that node
$k\!\in\!\mathcal{R}$ carries the input signal of the rate-$R_k$
user, and the rate-index set $[K]\!=\!\{1,\dots,K\}$ coincides with
the root index set $\mathcal{R}$. Each node $j\!\in\!\mathcal{V}$ carries a complex random vector
$V_j\!\in\!\C^{d_j}$. The roots
$\{V_r\}_{r\in\mathcal{R}}$ are jointly circular complex
Gaussian with prescribed block covariance
$\bm{\Sigma}_{\mathcal{R}}\!\triangleq\![\bm{\Sigma}_{r,r'}]_{r,r'\in\mathcal{R}}\!\succeq\!\bm{0}$:
the diagonal blocks $\bm{\Sigma}_{r,r}\!=\!\bm{\Sigma}_r$ are the
\emph{input covariances} of the rate-$R_r$ users, and the
off-diagonal blocks $\bm{\Sigma}_{r,r'}$ ($r\!\neq\!r'$) encode any
prescribed inter-source correlation. The independent-input case
$\bm{\Sigma}_{r,r'}\!=\!\bm{0}$ ($r\!\neq\!r'$) is adopted
in~\cref{sec:objectives,sec:experiments}; correlated roots are
admissible for distributed source coding problems such as
Slepian--Wolf and CEO. For each non-root node
$j\!\in\!\mathcal{V}\!\setminus\!\mathcal{R}$, let
$\Pa(j)\!\subseteq\!\mathcal{V}$ denote the set of in-neighbors of
$j$ in $\mathcal{G}$ (its parent nodes); the node obeys the
structural equation
\begin{equation}
V_j\!=\!\sum_{i\in\Pa(j)}\bm{A}_{ji}V_i+Z_j,\quad
Z_j\!\sim\!\CN(\bm{0},\bm{\Sigma}_j),
\label{eq:struct_eq}
\end{equation}
with edge matrices $\bm{A}_{ji}\!\in\!\C^{d_j\times d_i}$ and
additive channel noise $Z_j$. The noises
$\{Z_j\}_{j\in\mathcal{V}\setminus\mathcal{R}}$ are mutually
independent and independent of the user inputs
$\{V_r\}_{r\in\mathcal{R}}$.

\subsubsection{Edge factorization and design parameter}
Each edge matrix admits a multiplicative decomposition
\begin{equation}
\bm{A}_{ji}\!=\!\bm{A}_{ji}^{(1)}\bm{A}_{ji}^{(2)}\cdots\bm{A}_{ji}^{(L_{ji})}
\label{eq:edge_factor}
\end{equation}
into $L_{ji}\!\geq\!1$ matrix factors, each labeled either
\emph{controllable} (a tunable design variable: a precoder, relay
matrix, beamformer, etc.) or \emph{constant} (a fixed physical
quantity such as a channel realization $\bm{H}$). Indexing the
controllable factors by a designated set
$\mathcal{D}\!\subseteq\!\{(j,i,\ell)\}$ yields the \emph{design
parameter}
\begin{equation}
\param\!\triangleq\!\{\bm{A}_{ji}^{(\ell)}:(j,i,\ell)\!\in\!\mathcal{D}\},
\label{eq:design_param}
\end{equation}
to be optimized over a \emph{feasible set} $\feasset$ that collects
the power and structural constraints of the design: per-user
Frobenius power balls $\|\bm{F}_k\|_F^2\!\le\!P_k$, total-power balls,
per-antenna power, unitary precoders, or hybrid analog/digital
structures. Both
the K-recursion below and the conditional-MI evaluation
of~\cref{sec:logdet_mi} treat the constant factors as fixed and
expose $\param$ as the optimization variable on which Wirtinger
gradients are taken in~\cref{sec:wirtinger}.

\subsubsection{Shared controllable factors}
It is common for a single tunable matrix to enter many edges at once: a
relay node $i$'s processing matrix $\bm{F}_i$, for instance, is applied to
its received signal and then propagated outward, so every outgoing edge
$(j,i)$ with $i\!\in\!\Pa(j)$ carries the factorization
$\bm{A}_{ji}\!=\!\bm{H}_{ji}\bm{F}_i$ with $\bm{H}_{ji}$ the fixed channel
realization. The design parameter $\param$ therefore aliases $\bm{F}_i$
across multiple index positions, and the AD computation graph realizes
those positions as fan-out edges from a single leaf node for $\bm{F}_i$.
The backward pass aggregates the gradient contributions arriving along all
of these fan-out edges into one $\nabla_{\bm{F}_i^{*}} U$ via the chain
rule, at the same asymptotic cost as the forward
K-recursion~\cite{baydin2018ad}. \Cref{sec:exp_random_mac} exploits this
mechanism to jointly tune the nine relay processing matrices of a
multi-hop MAC under a single shared power budget.

\subsubsection{The K-recursion}
The local-to-global covariance map of a linear
Gaussian DAG, from the local edge matrices and noise covariances to
the joint node-pair covariances, is
classical~\cite{shachter1989,geiger1994,sullivant2010} and equivalent
to the standard closed form $(\bm{I}\!-\!\bm{A})^{-1}\bm{\Sigma}(\bm{I}\!-\!\bm{A})^{-\herm}$;
\cite{wadayama2026arxiv} cast it as a single \emph{differentiable}
forward operator, the \emph{K-recursion}, which constructs every
node-pair covariance
$\bm{K}_{jk}\!\triangleq\!\E[V_jV_k^{\herm}]$
in topological order: for $j,k\!\in\!\mathcal{V}$ with $j\!\ge\!k$,
\begin{equation}
\bm{K}_{jk}\!=\!
\begin{cases}
\bm{\Sigma}_{j,k}, & j,k\!\in\!\mathcal{R},\\[1pt]
\sum_{i\in\Pa(j)}\!\bm{A}_{ji}\bm{K}_{ik}, & j\!\notin\!\mathcal{R},\ k\!<\!j,\\[1pt]
\sum_{i,i'\in\Pa(j)}\!\bm{A}_{ji}\bm{K}_{ii'}\bm{A}_{ji'}^{\herm}\!+\!\bm{\Sigma}_j, & j\!\notin\!\mathcal{R},\ k\!=\!j,
\end{cases}
\label{eq:krec}
\end{equation}
with the Hermitian-flip convention
$\bm{K}_{ab}\!=\!\bm{K}_{ba}^{\herm}$ for $a\!<\!b$. The first case
reads the diagonal and inter-source blocks of
$\bm{\Sigma}_{\mathcal{R}}$ above directly as the base case. In
particular, the cross-block $\bm{K}_{ik}$ in the second line
of~\eqref{eq:krec} is read using the Hermitian-flip convention
whenever $i\!<\!k$, and likewise $\bm{K}_{ii'}$ in the third line
whenever $i\!<\!i'$. Each step
uses
only matrix products, sums, and Hermitian transposes, so the full
collection $\{\bm{K}_{jk}\}_{j\geq k}$ is obtained in a single forward
pass and is a smooth function of the design parameter $\param$.
This inverse-free, single-pass, all-pairs form is
what makes the K-recursion directly composable with reverse-mode AD;
the conditional-MI evaluator of~\cref{sec:cmi} inherits that
composability.

\begin{remark}[Justification of~\eqref{eq:krec}]
\label{rem:krec_proof}
The base case $\bm{K}_{r,r'}\!=\!\bm{\Sigma}_{r,r'}$ is the
definition of the joint root covariance $\bm{\Sigma}_{\mathcal{R}}$
and requires no independence among the roots. For a
non-root $j$ and $k\!<\!j$, substituting~\eqref{eq:struct_eq} into the
defining expectation of $\bm{K}_{jk}$ and using that $Z_j$ is
independent of every $V_k$ with $k\!<\!j$ collapses the cross
terms and leaves the cross-block formula. The self-block case
$k\!=\!j$ arises from substituting~\eqref{eq:struct_eq} on both
factors, which expands into a double sum over $\Pa(j)\!\times\!\Pa(j)$
plus the noise auto-covariance $\bm{\Sigma}_j$. Topological ordering
guarantees that every right-hand side in~\eqref{eq:krec} refers to
index pairs strictly preceding $(j,k)$, so one forward sweep through
$\mathcal{V}$ computes the entire upper-triangular collection.
See~\cite{wadayama2026arxiv} for the full argument.
\end{remark}

Tracking every cross-block, not merely the diagonal
$\{\bm{K}_{jj}\}_j$, is essential: whenever a node $j$ has two or more
parents $i\!\neq\!i'\!\in\!\Pa(j)$, the self-block $\bm{K}_{jj}$
in~\eqref{eq:krec} contains terms
$\bm{A}_{ji}\bm{K}_{ii'}\bm{A}_{ji'}^{\herm}$ involving the
\emph{cross-block} $\bm{K}_{ii'}$ between two distinct upstream nodes,
which in turn requires cross-blocks further upstream. This dependence
on parent cross-covariances is unavoidable in any DAG with merging
nodes.

\begin{remark}[Computational cost]
\label{rem:complexity}
Let $d_{\max}\!\triangleq\!\max_j d_j$. For a sparse DAG with
$|\Pa(j)|\!=\!O(1)$, the recursion~\eqref{eq:krec} produces the
$O(M^2)$ blocks $\{\bm{K}_{jk}\}_{j\geq k}$ in $O(M^2 d_{\max}^3)$
arithmetic operations and $O(M^2 d_{\max}^2)$ memory. This is a
worst-case uniform-dimension estimate; variable node dimensions yield
the corresponding blockwise sum. For the experiments
of~\cref{sec:experiments} ($M\!\leq\!12$, $d\!=\!4$), the resulting
cost is negligible in PyTorch double precision. The
reverse-mode AD backward pass runs in the same asymptotic time; the
only added cost is keeping the forward activations in memory for the
backward pass.
\end{remark}

\section{Differentiable Conditional Mutual Information}
\label{sec:cmi}

This section is the analytic core of the paper. We use the K-blocks
of~\cref{sec:krec_review} to obtain
$\MI(V_A;V_B\,|\,V_C)$ in closed form
(\cref{prop:cmi_schur}) and show that, via the Wirtinger calculus, it
is end-to-end differentiable in the design parameter $\param$ with
gradients delivered by a single reverse-mode AD pass.

\subsection{Log-Det Conditional MI via Block Extraction}
\label{sec:logdet_mi}
We now use the K-blocks of \cref{sec:krec_review} to evaluate
arbitrary conditional mutual information on $\mathcal{G}$.
Let $A,B,C$ be \emph{disjoint} subsets of $\mathcal{V}$ whose
conditional mutual information we wish to evaluate. Choose an
arbitrary but fixed ordering of their union and set
\begin{equation}
S\!\triangleq\!A\!\cup\!B\!\cup\!C
\!=\!\{s_1,\dots,s_n\}
\label{eq:support}
\end{equation}
(the \emph{support} of the computation), and for any subset
$X\!\subseteq\!\mathcal{V}$ enumerate its elements as
$\{x_1,\dots,x_{|X|}\}$ in a fixed order and
define the stacked vector
$V_X\!\triangleq\![V_{x_1}^{\top}\,\cdots\,V_{x_{|X|}}^{\top}]^{\top}$.
All conditional MI values below are invariant under such orderings
(each amounts to a simultaneous row/column permutation of
$\bm{\Sigma}_{S,S}$).
We adopt two subscript conventions throughout: (i) juxtaposition of
set symbols denotes set union, $XY\!\triangleq\!X\!\cup\!Y$; (ii) a comma
in a covariance subscript separates the row index set from the column
index set,
$\bm{\Sigma}_{X,Y}\!\triangleq\!\E[V_XV_Y^{\herm}]\!=\!
[\bm{K}_{x_iy_j}]_{x_i\in X,\,y_j\in Y}$.
Thus $BC$ below denotes the union $B\!\cup\!C$, not a Cartesian or
matrix product, and $\bm{\Sigma}_{A,BC}$ is the cross-covariance
block between $V_A$ and $V_{B\cup C}$.

In particular, the support stacked vector $V_S$ has the block
covariance
\begin{equation}
\bm{\Sigma}_{S,S}
\!\triangleq\!
\begin{bmatrix}
\bm{K}_{s_1 s_1} & \!\!\cdots\!\! & \bm{K}_{s_1 s_n}\\
\vdots & \!\!\ddots\!\! & \vdots\\
\bm{K}_{s_n s_1} & \!\!\cdots\!\! & \bm{K}_{s_n s_n}
\end{bmatrix}
\!=\![\bm{K}_{s_is_j}]_{1\le i,j\le n},
\label{eq:block_extract}
\end{equation}
formed by stacking only the $n^2$ K-blocks indexed by $S$, using the
Hermitian-flip convention of~\eqref{eq:krec}
($\bm{K}_{ab}\!=\!\bm{K}_{ba}^{\herm}$ for $a\!<\!b$) for any
sub-diagonal entry $\bm{K}_{s_is_j}$ with $s_i\!<\!s_j$. We refer
to this assembly as \emph{block extraction}; every row/column
sub-block
\[
\bm{\Sigma}_{A,A},\;\bm{\Sigma}_{A,C},\;\bm{\Sigma}_{C,C},\;
\bm{\Sigma}_{A,BC},\;\bm{\Sigma}_{BC,BC},\;\ldots
\]
entering the Schur complements below is read off
from~\eqref{eq:block_extract} by selecting the corresponding row and
column index ranges.

Recall that for a positive-definite, block-partitioned Hermitian
matrix
$\bm{M}\!=\!\bigl[\begin{smallmatrix}\bm{P}&\bm{N}\\\bm{N}^{\herm}&\bm{Q}\end{smallmatrix}\bigr]$
with $\bm{Q}\!\succ\!\bm{0}$, the \emph{Schur complement} of
$\bm{Q}$ in $\bm{M}$ is
$\bm{P}\!-\!\bm{N}\bm{Q}^{-1}\bm{N}^{\herm}$; when $\bm{M}$ is the
covariance of a jointly Gaussian pair $(V_P,V_Q)$, this
Schur complement is precisely the conditional covariance of
$V_P$ given $V_Q$~\cite[Ch.~8]{cover2006}. We call these
\emph{sub-block Schur complements of} $\bm{\Sigma}_{S,S}$ because
they are formed from selected row and column blocks of the support
covariance.

The following proposition turns the K-blocks produced
by~\eqref{eq:krec} into a closed-form, log-determinant expression
for $\MI(V_A;V_B\,|\,V_C)$ on \emph{any} disjoint
node subsets $A,B,C$ of $\mathcal{G}$. The
analytic statement is made in the positive-definite regime
$\bm{\Sigma}_{S,S}\!\succ\!\bm{0}$; a simple sufficient condition is
that the joint root covariance $\bm{\Sigma}_{\mathcal{R}}$ and every
non-root noise covariance $\bm{\Sigma}_j$ are positive definite, as
in the common full-rank input/noise setting of wireless channels.
Rank-deficient cases (which may arise, for instance, when
$\bm{\Sigma}_{\mathcal{R}}$ is itself rank-deficient) lie outside the
exact log-determinant statement below, and near-singular instances
are treated in numerical implementations by adding a small positive
multiple of the identity to the conditioning matrix before Cholesky
factorization (see~\cref{rem:cholesky}).

\begin{proposition}[Closed-form conditional MI on the K-recursion]
\label{prop:cmi_schur}
Let $\mathcal{G}$ be a linear Gaussian DAG with K-blocks
$\{\bm{K}_{jk}\}$ produced by~\eqref{eq:krec}, fix disjoint subsets
$A,B,C\!\subseteq\!\mathcal{V}$ with $A,B$ non-empty (the
conditioning set $C$ may be empty), set
$S\!=\!A\!\cup\!B\!\cup\!C$ and the block covariance
$\bm{\Sigma}_{S,S}$ of~\eqref{eq:block_extract}, and adopt the
convention
$\bm{\Sigma}_{A|\emptyset}\!\triangleq\!\bm{\Sigma}_{A,A}$. Suppose the support covariance is positive definite,
$\bm{\Sigma}_{S,S}\!\succ\!\bm{0}$. Then:
\begin{enumerate}[label=\textup{(\alph*)},leftmargin=*]
\item For $C\!\ne\!\emptyset$, the conditional covariances of $V_A$
given $V_C$ and given $V_{BC}$ are sub-block Schur complements
of $\bm{\Sigma}_{S,S}$,
\begin{equation}
\begin{aligned}
\bm{\Sigma}_{A|C}
 &\!=\!\bm{\Sigma}_{A,A}\!-\!\bm{\Sigma}_{A,C}\bm{\Sigma}_{C,C}^{-1}\bm{\Sigma}_{C,A},\\
\bm{\Sigma}_{A|BC}
 &\!=\!\bm{\Sigma}_{A,A}\!-\!\bm{\Sigma}_{A,BC}\bm{\Sigma}_{BC,BC}^{-1}\bm{\Sigma}_{BC,A};
\end{aligned}
\label{eq:schur}
\end{equation}
for $C\!=\!\emptyset$, the first identity reduces to the convention
$\bm{\Sigma}_{A|\emptyset}\!\triangleq\!\bm{\Sigma}_{A,A}$ stated above,
and the second holds with $BC\!=\!B$ (which is nonempty since $B$ is
nonempty).
\item The conditional mutual information of $(V_A,V_B)$
given $V_C$ admits the log-determinant closed form
\begin{equation}
\MI(V_A;V_B\,|\,V_C)
\!=\!\log\det\bm{\Sigma}_{A|C}\!-\!\log\det\bm{\Sigma}_{A|BC};
\label{eq:cmi_logdet}
\end{equation}
the unconditional case $C\!=\!\emptyset$ specializes to
$\MI(V_A;V_B)\!=\!\log\det\bm{\Sigma}_{A,A}\!-\!\log\det\bm{\Sigma}_{A|B}$.
\end{enumerate}
\end{proposition}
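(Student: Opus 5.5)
The plan is to establish first that the stacked support vector $V_S$ is a zero-mean circular complex Gaussian vector whose covariance is exactly the block-extracted matrix $\bm{\Sigma}_{S,S}$ of~\eqref{eq:block_extract}. Once this holds, both parts reduce to standard facts about jointly Gaussian vectors.

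For the first step, unroll~\eqref{eq:struct_eq} in topological order. Every $V_j$ is then a fixed linear map applied to the jointly Gaussian collection consisting of $\{V_r\}_{r\in\mathcal{R}}$ together with the independent noises $\{Z_j\}$. The noises are independent of the roots, circular, and zero-mean, so this whole collection is jointly circular Gaussian, and hence so is any linear image of it, including $V_S$. By \cref{rem:krec_proof}, each block $\bm{K}_{s_is_j}$ equals $\E[V_{s_i}V_{s_j}^{\herm}]$, with the Hermitian-flip convention supplying the blocks for $s_i<s_j$. Therefore $\E[V_SV_S^{\herm}]=\bm{\Sigma}_{S,S}$. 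Every sub-block $\bm{\Sigma}_{X,Y}$ with $X,Y\subseteq S$ is a row/column selection of this matrix and equals $\E[V_XV_Y^{\herm}]$.

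For part~(a), positive definiteness of $\bm{\Sigma}_{S,S}$ implies that every principal submatrix is positive definite. In particular $\bm{\Sigma}_{C,C}\succ\bm{0}$ and $\bm{\Sigma}_{BC,BC}\succ\bm{0}$, so the inverses in~\eqref{eq:schur} exist. The identities in~\eqref{eq:schur} then follow from the Gaussian conditioning formula cited above~\cite[Ch.~8]{cover2006}, applied to the pairs $(V_A,V_C)$ and $(V_A,V_{BC})$: the conditional law of $V_A$ is circular Gaussian, its conditional covariance is the Schur complement, and this covariance does not depend on the conditioning value. Each Schur complement is itself HPD, because it is the Schur complement of a principal block in the HPD matrix $\bm{\Sigma}_{AC,AC}$ (respectively $\bm{\Sigma}_{ABC,ABC}$). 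This makes the log-determinants in part~(b) finite. For $C=\emptyset$ we only need the convention for the first identity, while the second identity is the same formula with $BC=B$.

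For part~(b), write $\MI(V_A;V_B\,|\,V_C)=h(V_A\,|\,V_C)-h(V_A\,|\,V_B,V_C)$ using differential entropies. This is legitimate because $\bm{\Sigma}_{S,S}\succ\bm{0}$ ensures all densities exist and all entropies are finite. For a circular Gaussian with HPD covariance $\bm{\Sigma}$ in $\C^{d}$, the entropy is $\log\det(\pi e\bm{\Sigma})$. Since the conditional covariance is constant in the conditioning value, the conditional entropy equals $\log\det(\pi e\,\bm{\Sigma}_{A|C})$, and likewise for $BC$. Subtracting, the $(\pi e)^{d_A}$ factors cancel and~\eqref{eq:cmi_logdet} results. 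The case $C=\emptyset$ is the same computation with $h(V_A)$ in place of $h(V_A\,|\,V_C)$.

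The main point needing care is the first step: showing that the recursion's output really is the covariance of a jointly Gaussian $V_S$, including the cross-blocks between arbitrary non-adjacent nodes and the roots. I would handle this by a short induction along the topological order, with \cref{rem:krec_proof} supplying the one-step identity.
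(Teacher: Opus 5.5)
Your proposal is correct and follows essentially the same route as the paper's proof: Gaussianity of $V_S$ from the structural equations, the Schur-complement/conditional-covariance correspondence applied to $(V_A,V_C)$ and $(V_A,V_{BC})$, and the entropy chain rule with $h=\log\det(\pi e\,\bm{\Sigma})$. Your extra checks, that the K-recursion output is the actual covariance of $V_S$ and that the conditional covariance does not depend on the conditioning value, spell out steps the paper leaves implicit.
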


\begin{proof}
\emph{(a)} The linear Gaussian DAG model
of~\cref{sec:krec_review} makes every $V_j$ an affine function
of the jointly Gaussian roots $\{V_r\}_{r\in\mathcal{R}}$ and
of the mutually independent Gaussian noises $\{Z_j\}$, so
$V_S$ is circular complex Gaussian with covariance
$\bm{\Sigma}_{S,S}\!\succ\!\bm{0}$. Every nonempty principal
sub-block of $\bm{\Sigma}_{S,S}$ is therefore positive definite,
and the same holds for every Schur complement obtained from it.
For nonempty disjoint $A,X\!\subseteq\!S$, the joint vector
$(V_A,V_X)$ has covariance
$\bm{\Sigma}_{AX,AX}\!\succ\!\bm{0}$, and the Schur-complement /
conditional-covariance correspondence recalled above gives
$\bm{\Sigma}_{A|X}\!=\!\bm{\Sigma}_{A,A}\!-\!\bm{\Sigma}_{A,X}\bm{\Sigma}_{X,X}^{-1}\bm{\Sigma}_{X,A}\!\succ\!\bm{0}$
as the conditional covariance of the (circular complex Gaussian)
$V_A\,|\,V_X$. Specializing to $X\!=\!C$ (when
$C\!\neq\!\emptyset$) and $X\!=\!BC$ yields the two formulas
in~\eqref{eq:schur}; the boundary case $C\!=\!\emptyset$ is covered
by the convention
$\bm{\Sigma}_{A|\emptyset}\!\triangleq\!\bm{\Sigma}_{A,A}\!\succ\!\bm{0}$.

\emph{(b)} Since $\bm{\Sigma}_{S,S}\!\succ\!\bm{0}$, the conditional
covariances obtained in part~(a) are positive definite, so the
circular-complex-Gaussian differential entropy formula applies to
both. By the entropy chain rule,
$\MI(V_A;V_B\,|\,V_C)\!=\!h(V_A\,|\,V_C)\!-\!h(V_A\,|\,V_{BC})$,
where $h(V_A\,|\,V_\emptyset)\!\triangleq\!h(V_A)$
for $C\!=\!\emptyset$. A circular complex Gaussian vector with
covariance $\bm{\Sigma}\!\succ\!\bm{0}$ has differential entropy
$h(\cdot)\!=\!\log\det(\pi e\,\bm{\Sigma})$ in nats; substituting
the conditional covariances of~(a) and noting that the
$\log\det(\pi e\,\bm{I})$ terms cancel
yields~\eqref{eq:cmi_logdet}.
\end{proof}

The expression~\eqref{eq:cmi_logdet} is asymmetric only in its
computational form: exchanging $A$ and $B$ throughout yields the
equivalent formula
$\MI(V_A;V_B\,|\,V_C)\!=\!\log\det\bm{\Sigma}_{B|C}\!-\!\log\det\bm{\Sigma}_{B|AC}$
based on the conditional covariance of $V_B$, with the same
value by the $A\!\leftrightarrow\!B$ symmetry of conditional MI.

\Cref{fig:schur} visualizes the two sub-block Schur complements
of~\eqref{eq:schur} as selections from the $3\!\times\!3$ block
partition of $\bm{\Sigma}_{S,S}$ by the disjoint subsets $A,B,C$.
In each panel the anchor cell $\bm{\Sigma}_{A,A}$ is shaded green,
the blocks entering the correction term are shaded blue, and the
uninvolved blocks are greyed out. Panel~(a) shows $\bm{\Sigma}_{A|C}$,
which uses only the $A$- and $C$-blocks; panel~(b) shows
$\bm{\Sigma}_{A|BC}$, which uses every block of $\bm{\Sigma}_{S,S}$.

\begin{figure}[t]
\centering
\includegraphics[width=\linewidth]{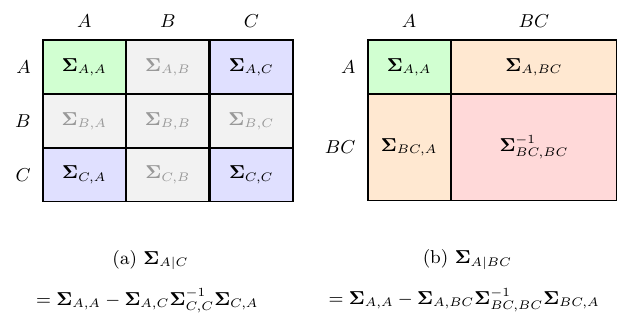}
\caption{Sub-block Schur complements formed from the support
covariance $\bm{\Sigma}_{S,S}$: (a) $\bm{\Sigma}_{A|C}$;
(b) $\bm{\Sigma}_{A|BC}$.}
\label{fig:schur}
\end{figure}

\begin{remark}[Numerical implementation]
\label{rem:cholesky}
The exact formulas of \cref{prop:cmi_schur} are stated under positive
definiteness; in finite-precision arithmetic, we evaluate a slightly
regularized version by adding a small diagonal jitter at two distinct
points:
\emph{(i)} before the Cholesky factorization of the \emph{conditioning
matrices} $\bm{\Sigma}_{C,C},\bm{\Sigma}_{BC,BC}$ in the
Schur-complement solves of~\eqref{eq:schur}; and
\emph{(ii)} before the Cholesky factorization of the resulting
\emph{conditional covariances} $\bm{\Sigma}_{A|C},\bm{\Sigma}_{A|BC}$
for the log-determinants of~\eqref{eq:cmi_logdet}.
The inverses $\bm{\Sigma}_{C,C}^{-1}$ and $\bm{\Sigma}_{BC,BC}^{-1}$
in~\eqref{eq:schur} appear for notational clarity; they are never
formed explicitly in practice. Each sub-block Schur complement is
evaluated through a Cholesky-based linear solve, and each
log-determinant in~\eqref{eq:cmi_logdet} through the sum of the
diagonal logs of the corresponding Cholesky factor: a real scalar,
since the diagonal of the Cholesky factor of a Hermitian
positive-definite matrix is positive real, and any small imaginary
residual produced by complex-tensor floating-point arithmetic is
discarded by construction. At point~(i),
when fading realizations or rank-deficient controllable factors drive
the conditioning matrix close to singular, the diagonal jitter
stabilizes the solve. At point~(ii), accumulated floating-point error
along the K-recursion / block-extraction / Schur pipeline may also
cause the output covariances $\bm{\Sigma}_{A|C},\bm{\Sigma}_{A|BC}$ to
drift away from Hermitian positive-definite form; we therefore
symmetrize each as
$\tfrac{1}{2}(\bm{\Sigma}\!+\!\bm{\Sigma}^{\herm})$ and apply a
matching diagonal jitter before its log-determinant is evaluated.
\end{remark}

On the positive-definite domain of~\cref{prop:cmi_schur}, evaluating
any $\MI(V_A;V_B\,|\,V_C)$ on $\mathcal{G}$ thus
reduces to three smooth ingredients applied to the K-blocks indexed
by the support $S$ of~\eqref{eq:support}:
(i) block extraction~\eqref{eq:block_extract} of $\bm{\Sigma}_{S,S}$,
(ii) two sub-block Schur complements~\eqref{eq:schur},
and (iii) two log-determinants in~\eqref{eq:cmi_logdet}.
With a fixed diagonal regularization $\bm{\Sigma}\!\mapsto\!\bm{\Sigma}\!+\!\epsilon\bm{I}$,
the same pipeline evaluates a smooth regularized surrogate on the
entire feasible set $\feasset$. A conditional-differential-entropy
specialization of the same pipeline (one Schur complement and one
log-determinant) is recorded in Appendix~\ref{app:entropy}.

\subsection{Wirtinger Calculus and Reverse-Mode AD}
\label{sec:wirtinger}
For a real-valued function $f(\bm{\Theta})\!\in\!\R$ of a complex
matrix variable $\bm{\Theta}\!\in\!\C^{p\times q}$, the Wirtinger
conjugate-side derivative
$\nabla_{\bm{\Theta}^{*}}f\!\triangleq\!(\partial f/\partial\bm{\Theta}^{*})^T$
is the steepest-ascent direction in the standard real-Euclidean
metric~\cite{schreier_scharf2010,palomar2006} and satisfies
$\partial f/\partial\bm{\Theta}\!=\!(\partial f/\partial\bm{\Theta}^{*})^{*}$,
so that one of the two partials suffices to specify $\nabla f$.
When $f$ depends on a tuple of complex matrix variables, such as the
design parameter $\param$ of~\eqref{eq:design_param}, we write
$\nabla_{\param^{*}}f$ for the tuple of conjugate-side derivatives,
one per controllable factor.
Modern automatic-differentiation engines that support complex
tensors (PyTorch~\cite{paszke2019pytorch} in particular) compose
this Wirtinger calculus through every elementary complex primitive
(matrix products, sums, Hermitian transposes, Cholesky
factorizations, triangular solves, and Cholesky-based
log-determinants); a single reverse-mode pass therefore returns
$\nabla_{\bm{\Theta}^{*}}f$ at every complex leaf up to the
convention of the AD engine and finite-precision arithmetic, with no
manual derivation.\footnote{Specific complex-AD conventions are
engine- and version-dependent. In the PyTorch convention used in our
implementation, a complex leaf's \texttt{.grad} attribute is
populated with $2\,\partial f/\partial\bm{\Theta}^{*}$ rather than
$\partial f/\partial\bm{\Theta}^{*}$ itself; this fixed
multiplicative offset is absorbed into the step size of any
first-order optimizer and does not affect optimization behavior.
Readers using a different engine or version should consult its
current documentation.}
Because each step of the K-recursion~\eqref{eq:krec} and each
ingredient of the block-extraction
pipeline~\eqref{eq:block_extract}--\eqref{eq:cmi_logdet} is built
from such primitives, any conditional MI on $\mathcal{G}$ that
satisfies the positive-definiteness condition of~\cref{prop:cmi_schur}
is an end-to-end differentiable function of the design parameter
$\param$ of~\eqref{eq:design_param}, and its Wirtinger gradient
$\nabla_{\param^{*}}\!\MI$ is delivered by a single reverse-mode AD
sweep.

\section{Conditional-MI Objectives and Their Gradient-Based Optimization}
\label{sec:objectives}

\Cref{sec:cmi} evaluates a single conditional mutual information
$\MI(V_A;V_B\,|\,V_C)$ as an end-to-end differentiable
function of the design parameter $\param$. A physical-layer design
problem is seldom a single such quantity; this section shows that
\emph{any} differentiable function of conditional mutual informations,
used as either an objective or a constraint, inherits the same shared
differentiable computation graph (\cref{sec:cmi_graph}) and is
therefore optimized by the projected gradient method. The constant
edge factors (the channel realization) are fixed throughout this
section, so every quantity is written as a function of $\param$ alone.

\subsection{Conditional-MI Objectives}
\label{sec:cmi_obj}

A physical-layer design objective aggregates conditional mutual
informations into a scalar. The most general form is
\begin{equation}
U(\param)\!=\!\Phi\bigl(\MI_1(\param),\dots,\MI_N(\param)\bigr),\quad
\MI_n(\param)\!\triangleq\!\MI(V_{A_n};V_{B_n}\,|\,V_{C_n}),
\label{eq:objective}
\end{equation}
with disjoint $A_n,B_n,C_n\!\subseteq\!\mathcal{V}$ and a
differentiable composition $\Phi\!:\!\R^N\!\to\!\R$ that may further
depend on auxiliary constants (target rates, a temperature, etc.);
we call such a $U$ a \emph{conditional-MI objective}. $\Phi$ need
only be differentiable on an open set containing the attainable
CMI vector, so logarithmic utilities such as proportional fairness
$\log\MI_n$ are handled in their standard regularized form
$\log(\MI_n+\delta)$ in degenerate zero-rate cases. By
\cref{prop:cmi_schur} each $\MI_n$ is a smooth function of $\param$
on the one shared K-recursion graph of~\cref{sec:logdet_mi}, the
chain rule makes $U$ differentiable in $\param$, and a single
reverse-mode AD sweep (\cref{sec:wirtinger}) returns the Wirtinger
gradient $\nabla_{\param^{*}}U$ with respect to \emph{all}
controllable factors at once, independent of the number $N$ of
CMIs requested or of the choice of $\Phi$.

The canonical instance is the linear form
\begin{equation}
U(\param)\!=\!\sum_{n=1}^{N}\alpha_n\,
\MI(V_{A_n};V_{B_n}\,|\,V_{C_n}),
\label{eq:utility}
\end{equation}
with real weights $\{\alpha_n\}$ encoding a weighted sum-rate
criterion (positive weights), a secrecy or leakage penalty
(sign-indefinite weights), or a Lagrangian relaxation of a
conditional-MI constraint (\cref{sec:cmi_constraints}); we call such
a $U$ a \emph{linear conditional-MI objective}. This form covers
many standard Gaussian multi-terminal rate functions: after fixing
the coding order, auxiliary-variable structure, time-sharing
choice, and Gaussian linear parametrization, the rate-function
facets of the multiple-access,
broadcast, interference, and relay rate
regions~\cite{cover2006,elgamal2011} can be written as finite linear
combinations of conditional MIs with weights $\{\alpha_n\}$ of
either sign, so the framework provides a direct, differentiable
handle on the rate-function calculus of multi-terminal information
theory; explicit examples, including the non-linear composite
sigmoid surrogate of rate-region outage and a worked two-user MAC
instance, are collected in~Appendix~\ref{app:rate_region_examples}.

\subsection{Conditional-MI Constraints}
\label{sec:cmi_constraints}

Conditional MIs are equally first-class on the \emph{constraint} side
of a design problem. Combined with a conditional-MI
objective~\eqref{eq:objective}, $N_{\mathrm{c}}$ such inequalities
form the constrained problem
\begin{equation}
\max_{\param\in\feasset}\;U(\param)\quad
\text{s.t.}\quad g_n(\param)\!\le\!R_n,\ n\!=\!1,\dots,N_{\mathrm{c}},
\label{eq:cmi_constraint}
\end{equation}
where each $g_n$ is a conditional MI on $\mathcal{G}$, or more
generally a linear combination of conditional MIs forming an
aggregate rate budget, and $R_n$ a prescribed bound. Standard
single-MI instances are leakage budgets in physical-layer security
($\MI(V_X;V_Z)\!\le\!\varepsilon$), per-link fronthaul caps
in cell-free networks
($\MI(V_{X_k};V_{\hat{X}_k})\!\le\!C_{\mathrm{FH}}$ for
each $k$), and information-bottleneck constraints
($\MI(V_U;V_X)\!\le\!R_{\mathrm{IB}}$); aggregate forms
such as a total fronthaul cap
$\sum_k\MI(V_{X_k};V_{\hat{X}_k})\!\le\!C_{\mathrm{total}}$
fit the same template.

For any multipliers $\bm{\lambda}\!=\!(\lambda_1,\dots,\lambda_{N_{\mathrm{c}}})$
with $\lambda_n\!\ge\!0$, the Lagrangian of~\eqref{eq:cmi_constraint}
reduces, up to a constant $\sum_n\lambda_n R_n$ that is independent
of $\param$ and immaterial for the inner $\param$-optimization, to
the $\param$-dependent part
\begin{equation}
U_{\bm{\lambda}}(\param)
\!=\!U(\param)
\!-\!\sum_{n=1}^{N_{\mathrm{c}}}\!\lambda_n\,g_n(\param),
\label{eq:cmi_lagrangian}
\end{equation}
which is itself a sign-indefinite conditional-MI
objective~\eqref{eq:objective} (and linear,~\eqref{eq:utility}, when
$U$ is): every $g_n$ is evaluated from the K-blocks
of~\cref{prop:cmi_schur} on the same shared graph, and one
reverse-mode AD sweep returns the Lagrangian gradient
$\nabla_{\param^{*}}U_{\bm{\lambda}}$ needed for the primal update.
Conditional-MI constraints therefore introduce no new computational
primitive.

\subsection{Shared Differentiable Computation Graph}
\label{sec:cmi_graph}

The objectives~\eqref{eq:objective}, the constraint MIs
of~\eqref{eq:cmi_constraint}, and their non-linear composites
(Appendix~\ref{app:rate_region_examples}) are all built from one
primitive: a conditional MI $\MI(V_A;V_B\,|\,V_C)$ on the linear
Gaussian DAG $\mathcal{G}$, evaluated from the K-blocks
of~\cref{prop:cmi_schur}. The whole design problem thus sits on a
\emph{single shared computation graph} rooted at $\param$, on which
each gradient evaluation is the pipeline:
\begin{enumerate}[label=(\arabic*),leftmargin=*,nosep,topsep=2pt]
\item one K-recursion forward pass~\eqref{eq:krec} computing the
K-blocks from $\param$;
\item block extractions, Schur complements, and log-determinants for
every CMI requested by the scalar objective $U$
(or Lagrangian $U_{\bm{\lambda}}$);
\item formation of the scalar $U$ or $U_{\bm{\lambda}}$;
\item one reverse-mode AD backward sweep from that scalar, returning
$\nabla_{\param^{*}}U$ (or $\nabla_{\param^{*}}U_{\bm{\lambda}}$)
with respect to all controllable factors (\cref{sec:wirtinger}).
\end{enumerate}
Steps~(1) and~(4) each run once per evaluation, independent of the
number $N$ of CMIs requested or of the choice of $\Phi$; the
per-CMI cost lies in step~(2). The resulting gradient feeds the
projected gradient update of~\cref{sec:opt}.

\subsection{Optimization of Conditional-MI Objectives}
\label{sec:opt}

The framework targets optimization problems of the form
\begin{equation}
\max_{\param\in\feasset}\;U(\param)
\quad \text{s.t.}\quad
g_n(\param)\!\le\!R_n,\ n=1,\dots,N_{\mathrm{c}},
\label{eq:opt_problem}
\end{equation}
combining (i) a conditional-MI objective $U$ of~\eqref{eq:objective}
in the design parameter $\param$ of~\eqref{eq:design_param}, (ii) the
physical feasible set $\feasset$ of~\cref{sec:krec_review}, which
encodes Frobenius or total-power budgets, diagonal or unit-modulus
parameterizations, and similar structural constraints on the
controllable factors, and (iii) $N_{\mathrm{c}}\!\ge\!0$ conditional-MI
constraints~\eqref{eq:cmi_constraint} when present. The constraint
term in~\eqref{eq:opt_problem} is absent for $N_{\mathrm{c}}\!=\!0$,
and each $g_n$ on the constraint side may be a single CMI
(e.g., a leakage budget) or a linear combination of CMIs (e.g., an
aggregate fronthaul cap); see \cref{sec:cmi_constraints}. The two
constraint types are handled differently by the two steps described
below: $\feasset$ is enforced by Euclidean projection in every
iterate (\cref{sec:opt_pgd}), while the conditional-MI constraints
are absorbed into the objective via Lagrangian relaxation
(\cref{sec:opt_lag}).

\subsubsection{Projected Gradient Method}
\label{sec:opt_pgd}
We first treat the case $N_{\mathrm{c}}\!=\!0$, where
\eqref{eq:opt_problem} reduces to maximization of a conditional-MI
objective over $\feasset$. The objective $U$ of~\eqref{eq:objective}
is \emph{maximized} when it scores a communication benefit (e.g.,
a weighted sum-rate) and \emph{minimized} when it scores a cost
(e.g., a rate-region outage surrogate). Either direction is handled
by the projected gradient method: with step size $\alpha_t\!>\!0$,
\begin{equation}
\param^{(t+1)}\!=\!\Pi_{\feasset}\!\bigl(\param^{(t)}\!\pm\!\alpha_t\,
\nabla_{\param^{*}}U(\param^{(t)})\bigr),
\label{eq:pgd}
\end{equation}
the $+$ sign giving gradient ascent for maximization and the $-$ sign
descent for minimization, where $\Pi_{\feasset}$ is the Euclidean
projection onto $\feasset$, in closed form for the constraint
families of~\cref{sec:krec_review}. Each iteration is one K-recursion forward pass,
one reverse-mode AD backward pass, and one closed-form projection;
the gradient is exact up to the Wirtinger convention
of~\cref{sec:wirtinger}. Since $U$ is in general nonconvex in
$\param$ and $\feasset$ may itself be nonconvex (e.g., unit-modulus
or hybrid analog/digital structures), the method should be
understood as seeking a stationary point: under standard smoothness
and step-size assumptions, any limit point of the iterates
satisfying the projected first-order condition is stationary.

\subsubsection{Constrained Objectives via Lagrangian Relaxation}
\label{sec:opt_lag}
When $N_{\mathrm{c}}\!\ge\!1$ conditional-MI constraints are present
in~\eqref{eq:opt_problem}, the
Lagrangian~\eqref{eq:cmi_lagrangian} of~\cref{sec:cmi_constraints}
is itself a sign-indefinite conditional-MI objective (a
sign-indefinite linear instance~\eqref{eq:utility} when $U$ is
linear), so for any fixed multiplier vector
$\bm{\lambda}\!\ge\!\bm{0}$ the projected-gradient
update~\eqref{eq:pgd} applies to $U_{\bm{\lambda}}$ unchanged (the structural feasible set $\feasset$
is enforced as before). To
trace the empirical Pareto curve between $U$ and the constraint
quantities $g_n$, $\bm{\lambda}$ is swept on a grid; at each
$\bm{\lambda}$ several initial points (a warm start from the
previous $\bm{\lambda}$ together with random complex-Gaussian
restarts) are optimized in parallel and the candidate of highest
$U_{\bm{\lambda}}$ is retained, mitigating the local optima of the
non-convex Lagrangian. Under the same standard smoothness and
step-size assumptions, any limit point is a local KKT point
of~\eqref{eq:cmi_constraint} rather than a global optimum, so
sweeping $\bm{\lambda}$ traces \emph{locally supported}
Pareto-stationary points; for any active constraint
($\lambda_n\!>\!0$), complementary slackness ties the multiplier to
the constraint level via $R_n\!=\!g_n(\param^{*}_{\bm{\lambda}})$.
The wiretap experiment of~\cref{sec:exp_secrecy} implements this
procedure for a single constraint.

\begin{remark}[Point-to-point special case]
\label{rem:p2p}
With a single rate-bearing user, $K\!=\!1$, a rate function reduces to
one unconditional mutual information $\MI(V_X;V_Y)$ and the
framework degenerates to the exact information-gradient optimization
of~\cite{wadayama2026arxiv}; the conditional-MI
objective~\eqref{eq:objective} is its multi-terminal generalization.
\end{remark}

\section{Numerical Results}
\label{sec:experiments}

We demonstrate the framework with three experiments, each on a fixed
linear Gaussian DAG, the deterministic setting of~\cref{sec:objectives}.
Each optimizes a conditional-MI objective end to end by the projected
gradient method, with no problem-specific gradient derivation: every
iteration is one K-recursion forward pass and one reverse-mode AD
sweep. \Cref{tab:num_params} collects the numerical parameters used
in each experiment. All runs use PyTorch \texttt{complex128} on CPU,
a constant projected-gradient step size, and Frobenius-ball
projection by uniform rescaling; no Cholesky jitter is activated,
since the support covariances stay well-conditioned in double
precision. The random seed for~\cref{sec:exp_random_mac} fixes the
single network instance used throughout---its layered inter-layer
edge set and the per-edge i.i.d.\ $\CN(0,1)$ channels, instantiated
once and held fixed. The $\lambda$-sweep of~\cref{sec:exp_secrecy}
uses $\lambda\!\in\!\{0,0.1,0.25,0.5,0.75,1,1.25,1.5,2,3,5\}$ with
$10$ multi-start restarts per $\lambda$ (one warm start from the
previous $\lambda$ and nine random complex-Gaussian initial points
rescaled to $\|\bm{F}\|_F^2\!=\!P$). The numerical implementation
underlying these experiments (the K-recursion forward pass, the
sub-block Schur-complement evaluator of~\cref{prop:cmi_schur}, the
projected-gradient loop, and the PyTorch autograd interface) is
available as the open-source library \texttt{cmi-dag} at
\url{https://github.com/wadayama/cmi-dag}; its symbolic companion
\texttt{symbolic-dag}, which derives the same conditional-MI
quantities and their Wirtinger gradients in closed form and
cross-validates them against the numerical evaluator, is described
in Appendix~\ref{app:symbolic_dag}.

\begin{table}[!t]
\centering
\caption{Numerical parameters of the three experiments.}
\label{tab:num_params}
\renewcommand{\arraystretch}{1.1}
\begin{tabular}{@{}lccc@{}}
\toprule
                       & \cref{sec:exp_rate_region}
                       & \cref{sec:exp_secrecy}
                       & \cref{sec:exp_random_mac} \\
\midrule
Antenna dim $d$        & 4     & 4     & 4     \\
Power budget $P$       & 8     & 8     & 36    \\
Iterations $T$         & 120   & 200   & 800   \\
Step size $\alpha_t$   & 0.01  & 0.04  & 0.003 \\
Random seed            & 7     & 7     & 7     \\
Jitter $\epsilon$      & 0     & 0     & 0     \\
$\lambda$-sweep        & ---   & yes   & ---   \\
\bottomrule
\end{tabular}
\end{table}

\subsection{Rate-Region Maximization}
\label{sec:exp_rate_region}
The first experiment enlarges the achievable rate region of the
two-user MIMO MAC of~\cref{ex:mac}.

\emph{Setup.}
The channel is one fixed realization of a two-user MAC with
$d\!=\!4$ antennas per transmitter and at the receiver, i.i.d.\
$\CN(0,1)$ entries, and receiver noise $\sigma^2\!=\!1$. The design
parameter is the precoder pair $(\bm{F}_1,\bm{F}_2)$, constrained to
the shared total-power ball
$\|\bm{F}_1\|_F^2\!+\!\|\bm{F}_2\|_F^2\!\le\!P$ with $P\!=\!8$.
Writing $\MI_1,\MI_2,\MI_{12}$ for the three pentagon facets
of~\cref{ex:mac}, the objective is the linear conditional-MI
objective~\eqref{eq:utility}
\begin{equation*}
U(\param)=\MI_1+\MI_2+\MI_{12},
\end{equation*}
whose positive-weight maximization encourages simultaneous
enlargement of the three MAC facets of the rate
region~\eqref{eq:rate_region}. We maximize $U$ by the
projected gradient method~\eqref{eq:pgd} from the uniform precoders
$\bm{F}_k\!=\!\bm{I}_d$ (no precoding, equal power, already on the
budget boundary), with a constant step size; projection onto the
total-power ball is closed-form.

\emph{Results.}
\Cref{fig:rate_region} reports the outcome. Panel~(a) draws the MAC
rate-region pentagon at five iterations: in this realization all
three facets increase monotonically along the trajectory, so the
rate region expands outward from the unprecoded region at
iteration~$0$ to the optimized one at iteration~$120$. Panel~(b) shows $U$ rising
smoothly from $20.05$ to $22.60$ nats and converging within about
$120$ iterations. The facets grow from
$(\MI_1,\MI_2,\MI_{12})\!=\!(6.30,4.93,8.82)$ to $(7.33,5.16,10.11)$
nats, and the rate-region pentagon area\footnote{For a two-user MAC
pentagon $\{(R_1,R_2)\!\in\!\R_+^2:R_1\!\le\!\MI_1,\,R_2\!\le\!\MI_2,\,
R_1\!+\!R_2\!\le\!\MI_{12}\}$ in the regime
$\max(\MI_1,\MI_2)\!\le\!\MI_{12}\!\le\!\MI_1\!+\!\MI_2$ (satisfied
along the entire trajectory here), the area is
$\MI_1\MI_2-\tfrac{1}{2}(\MI_1\!+\!\MI_2\!-\!\MI_{12})^2$.}
increases by a factor~$1.24$.
As an external benchmark, waterfilling on the stacked channel
$[\bm{H}_1\,\bm{H}_2]$~\cite{telatar1999} under the same total power
$P$ yields the cooperative MAC sum-capacity
$C_{\mathrm{coop}}\!=\!11.32$~nats, an upper bound on $\MI_{12}$ since
cooperation removes the MAC's independent-encoder restriction; the
optimized facet $\MI_{12}\!=\!10.11$ attains $89\%$ of this bound,
leaving a $1.21$-nat residual gap.

\emph{Discussion.}
The total power stays at $P$ throughout (the uniform initialization
already lies on the budget boundary), so the entire gain is due to
precoder \emph{shaping}: at an unchanged power budget, the framework
finds the precoders that best match the two users to the fixed
channel. No problem-specific gradient was derived; the objective $U$
was simply specified, and its Wirtinger gradient
$\nabla_{\param^{*}}U$ was returned by one reverse-mode AD sweep
through the shared K-recursion graph at each iteration.

\begin{figure}[!tb]
\centering
\includegraphics[width=\linewidth]{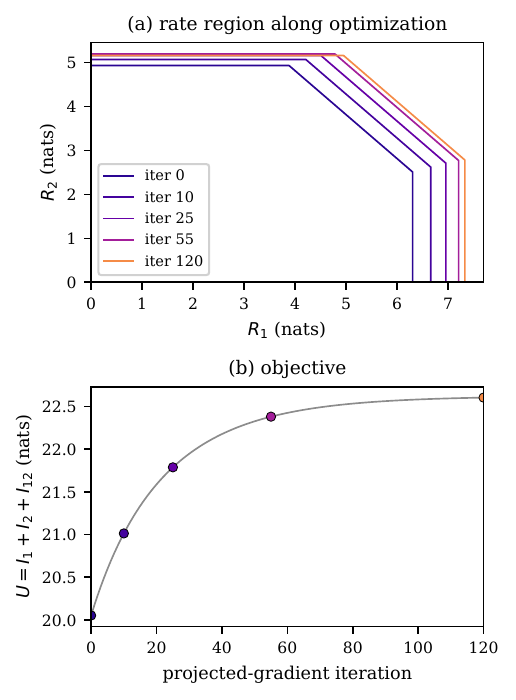}
\caption{Rate-region maximization on a fixed two-user MIMO MAC
($d\!=\!4$, shared power budget $P\!=\!8$). The precoders are
optimized to maximize the linear conditional-MI objective
$U\!=\!\MI_1\!+\!\MI_2\!+\!\MI_{12}$ by the projected gradient method.
(a)~the MAC rate-region pentagon at selected iterations, expanding
outward as the precoders are optimized; (b)~the objective $U$ versus
projected-gradient iteration.}
\label{fig:rate_region}
\end{figure}

\subsection{Secure Precoding}
\label{sec:exp_secrecy}
The second experiment optimizes a \emph{sign-indefinite} conditional-MI
objective (a secrecy rate) on a MIMO wiretap channel.

\emph{Setup.}
A transmitter $X$ sends to a legitimate receiver $Y$ while an
eavesdropper $Z$ observes the same transmission, a linear Gaussian
DAG with edges $X\!\to\!Y$ and $X\!\to\!Z$. The channels are one fixed
realization with $d\!=\!4$ antennas at every terminal (the
eavesdropper is \emph{not} assumed weaker), i.i.d.\ $\CN(0,1)$
entries, receiver noise $\sigma^2\!=\!1$. The design parameter is the
precoder $\bm{F}$, constrained to the Frobenius ball
$\|\bm{F}\|_F^2\!\le\!P$ with $P\!=\!8$. The objective is the secrecy
rate
\begin{equation*}
U(\param)=\MI(V_X;V_Y)-\MI(V_X;V_Z),
\end{equation*}
a conditional-MI objective~\eqref{eq:objective} whose linear $\Phi$ is
\emph{sign-indefinite}: the negative weight on the eavesdropper's
information is a leakage penalty. We maximize $U$ by the projected
gradient method~\eqref{eq:pgd} from the uniform precoder
$\bm{F}\!=\!\bm{I}_d$.

\emph{Results.}
\Cref{fig:secrecy} reports the outcome. Panel~(a) tracks the two
mutual informations along the iterations: $\MI(V_X;V_Y)$
moves from $8.23$ to $5.74$ nats, while $\MI(V_X;V_Z)$
drops more sharply from $6.72$ to $2.06$ nats (a $69\%$ reduction).
Panel~(b) shows the resulting secrecy rate climbing from $1.51$ to
$3.67$ nats, a factor of $2.43$. Panel~(c) extends the
optimization to a \emph{Lagrangian sweep} over $\lambda\!\ge\!0$ in
$U_\lambda\!=\!\MI(V_X;V_Y)\!-\!\lambda\,\MI(V_X;V_Z)$,
with $10$ multi-start restarts per $\lambda$ (one warm-start
from the previous $\lambda$ and nine random complex-Gaussian inits),
keeping the candidate of highest $U_\lambda$ as the local stationary
point at that $\lambda$. The resulting curve is an \emph{empirical
leakage-rate Pareto curve} produced by the proposed local
optimization procedure; consistently with~\cref{sec:opt_lag}, it
should be read as a set of locally supported Pareto-stationary
points along the leakage-rate trade-off rather than as the global
frontier $\max\{\MI(V_X;V_Y):\MI(V_X;V_Z)\!\le\!R\}$. The classical
secrecy rate ($\lambda\!=\!1$, the star marker) is a single point of
this curve.

\emph{Discussion.}
Both mutual informations decrease as the precoder restructures, but
$\MI(V_X;V_Z)$ drops far more steeply, so the secrecy rate
$\MI(V_X;V_Y)\!-\!\MI(V_X;V_Z)$ more than doubles
even though $\MI(V_X;V_Y)$ is partially traded away: the
precoder reallocates transmit power toward modes with a favorable
legitimate/eavesdropper gain tradeoff. Panel~(c) puts this single
operating
point in the broader context of a \emph{constraint-side} use of
conditional MI: for any prescribed leakage budget $R$, the
locally optimized legitimate rate produced by the procedure is read
off the empirical Pareto curve at $\MI(V_X;V_Z)\!=\!R$ via the
Lagrangian relaxation of~\cref{sec:cmi_constraints}, solved by the
projected-gradient update of~\cref{sec:opt_lag}. As in~\cref{sec:exp_rate_region}, no gradient
was derived by hand: only the scalar $\Phi$ changed, here to the
$\lambda$-parameterized sign-indefinite linear form, with the same
single-line K-recursion graph driving every $\lambda$.
\Cref{sec:exp_rate_region,sec:exp_secrecy} together exercise the
conditional-MI objective class of~\cref{sec:objectives} across
positive-weight, sign-indefinite, and constraint-side objectives.

\begin{figure}[!tb]
\centering
\includegraphics[width=\linewidth]{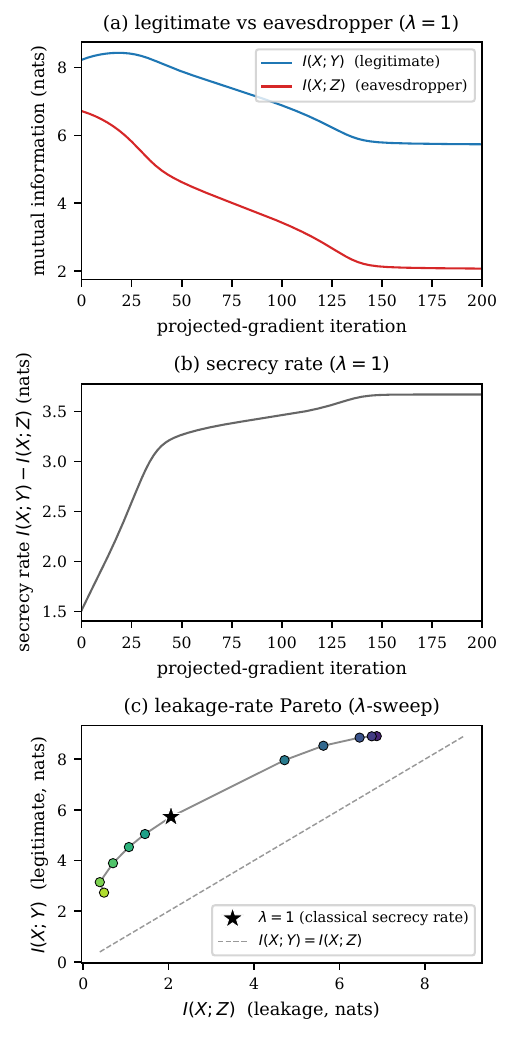}
\caption{Secure precoding on a fixed MIMO wiretap channel ($d\!=\!4$
antennas at every terminal, power budget $P\!=\!8$).
Panels~(a),~(b): the precoder maximizes the secrecy rate
$\MI(V_X;V_Y)\!-\!\MI(V_X;V_Z)$ ($\lambda\!=\!1$)
by the projected gradient method; both informations decrease as the
precoder restructures, but the eavesdropper's information drops more
steeply, so the secrecy rate more than doubles.
Panel~(c): empirical leakage-rate Pareto curve traced by a Lagrangian
sweep $U_\lambda\!=\!\MI(V_X;V_Y)\!-\!\lambda\,\MI(V_X;V_Z)$ with
$10$ multi-start restarts per $\lambda$ (locally supported
Pareto-stationary points; \cref{sec:opt_lag}); the star marks the
$\lambda\!=\!1$ point of the sweep, the classical secrecy-rate
optimization shown in panels~(a),~(b).}
\label{fig:secrecy}
\end{figure}

\subsection{Application to a Multi-Hop MAC Network}
\label{sec:exp_random_mac}
The third experiment takes the rate-region objective
of~\cref{sec:exp_rate_region} from the canonical single-hop MAC to a
larger \emph{multi-hop} network, illustrating that the
same procedure applies without modification to a nontrivial
topology.

\emph{Setup.}
The DAG is a given layered Gaussian
multiple-access network: two sources $s_1,s_2$, three relay layers of
three nodes each, and a single sink $t$ ($M=12$ nodes and $19$ edges,
with $d=4$ per node), with a fixed set of inter-layer edges
(\cref{fig:random_mac}(a)). The topology is given; only the per-edge
channels are random, each a fixed i.i.d.\ $\CN(0,1)$ realization drawn
once from a fixed seed. The two sources emit isotropic signals, and each of the nine
relay nodes carries a controllable processing matrix $\bm{F}_i$. All
relay matrices share one global total-power budget
$\sum_i\|\bm{F}_i\|_F^2\!\le\!P$ with $P\!=\!36$. The objective is the
MAC rate-region facet sum
\begin{equation*}
U(\param)=
\underbrace{\MI(V_{s_1};V_t\mid V_{s_2})}_{\MI_1}
+\underbrace{\MI(V_{s_2};V_t\mid V_{s_1})}_{\MI_2}
+\underbrace{\MI(V_{s_1},V_{s_2};V_t)}_{\MI_{12}},
\end{equation*}
the same linear conditional-MI objective as~\cref{sec:exp_rate_region}
(the two single-user facets $\MI_1,\MI_2$ are conditional mutual
informations, $\MI_{12}$ the joint sum-rate facet), now on a
multi-hop network with no simple topology-specific closed-form
optimizer. We maximize $U$ by
the projected gradient method~\eqref{eq:pgd} from the uniform
allocation (identity processing at every relay, equal power;
$P\!=\!(M\!-\!3)d$ places this on the budget boundary).

\emph{Results.}
\Cref{fig:random_mac} reports the outcome. Panel~(a) draws the
multi-hop MAC network, each relay node shaded by its optimized power
$\|\bm{F}_i\|_F^2$, a non-uniform, network-wide allocation. Panel~(b)
shows the objective and the three facets versus iteration: $U$
rises from $10.44$ to $18.62$ nats, a factor of $1.78$, and
converges; in this realization all three facets also increase,
$(\MI_1,\MI_2,\MI_{12})$ from $(3.64,1.80,5.00)$ to
$(6.59,2.98,9.05)$ nats, so the achievable rate region expands
outward.

\emph{Discussion.}
A multi-hop network of this size has no simple topology-specific
closed-form optimizer or hand-derived facet gradient. Specifying the facet sum $U$ is enough:
one reverse-mode AD sweep through the $M$-node K-recursion graph
returns the Wirtinger gradient with respect to all nine relay matrices
jointly, and projection onto the shared budget carries the
network-wide power allocation. Whereas~\cref{sec:exp_rate_region}
optimizes this rate-region objective on the canonical single-hop MAC,
here the \emph{same} objective applies without modification to a
multi-hop multi-terminal network of $M\!=\!12$ nodes and
$19$ edges, illustrating that the framework reaches linear Gaussian
DAGs of nontrivial size and topology with no new derivation.

\begin{figure}[!tb]
\centering
\includegraphics[width=\linewidth]{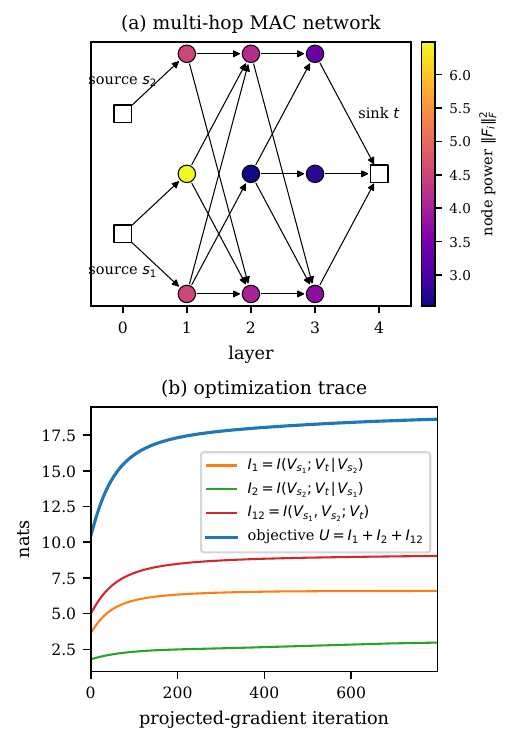}
\caption{Rate-region maximization on a fixed multi-hop Gaussian
MAC network ($M\!=\!12$ nodes, $19$ edges, $d\!=\!4$ per node, shared
power budget $P\!=\!36$). Nine relay processing matrices are optimized
jointly to maximize the facet sum $U\!=\!\MI_1\!+\!\MI_2\!+\!\MI_{12}$
by the projected gradient method. (a)~the MAC network (two
sources at the left, three relay layers, one sink at the right;
edges are directed left to right, reflecting the DAG topology),
each relay shaded by its optimized power $\|\bm{F}_i\|_F^2$; (b)~the objective $U$ and the three facets
$\MI_1,\MI_2,\MI_{12}$ versus projected-gradient iteration.}
\label{fig:random_mac}
\end{figure}

\section{Conclusion}
\label{sec:conclusion}

We presented a differentiable framework for
conditional-mutual-information design over linear Gaussian DAGs. Its
analytic core (\cref{prop:cmi_schur}) expresses any
conditional mutual information on the DAG as a log-determinant
difference of sub-block Schur complements of the support covariance,
assembled by a single K-recursion forward pass. On the
positive-definite Gaussian-DAG regime considered here (with the
diagonal-jitter regularization of~\cref{rem:cholesky} in numerical
implementations), every differentiable function of conditional
mutual informations (weighted rate objectives, secrecy or leakage
objectives, rate-region functions, and smooth non-linear surrogates
such as outage approximations) is an end-to-end differentiable
function of the design parameters and amenable to projected-gradient
optimization. The main practical consequence is that the user
specifies the DAG and the CMI-based scalar objective, while the
gradient computation is delegated to the shared K-recursion /
Wirtinger AD graph, with no problem-specific gradient derivation.

We demonstrated the framework on three experiments that together
cover positive-weight rate-region objectives, sign-indefinite
secrecy objectives with a Lagrangian leakage sweep, and shared
relay-matrix optimization on a nontrivial multi-hop DAG, all by the
projected gradient method; the wiretap Lagrangian sweep further
illustrates the constraint-side use of conditional MI.

Two complementary directions follow naturally.
\emph{(i) Within the present wireless/network optimization scope:}
the composite sigmoid surrogate of rate-region outage, its
stochastic optimization under statistical and instantaneous CSI,
and applications to broader multi-terminal channels beyond the MAC
and wiretap settings explored here.
\emph{(ii) Beyond CMI, as an information primitive:} the same
K-recursion graph supports differentiable conditional differential
entropy as a one-Schur, one-log-det specialization
(Appendix~\ref{app:entropy}), opening Bayesian experimental design,
distributed source coding, and Gaussian process innovation rates,
which are treated in a follow-up paper.

More broadly, the rich rate-region results of multi-terminal
information theory have not always translated naturally into the
gradient-based workflows of modern wireless-system design. By
rendering any combination of conditional mutual informations
end-to-end differentiable on a shared K-recursion graph, we hope
the present framework offers a practical bridge between the
rate-function calculus of multi-terminal information theory and
physical-layer design as differentiable programming.

\appendices
\section{Multi-Terminal Rate-Region Examples}
\label{app:rate_region_examples}

This appendix collects concrete instances of the abstract
forms~\eqref{eq:objective} and~\eqref{eq:utility} drawn from
multi-terminal information theory: the rate-function facets of
standard rate regions (a linear instance of~\eqref{eq:utility}), the
composite sigmoid surrogate of rate-region outage (a non-linear
instance of~\eqref{eq:objective}), and a worked two-user MAC example.

For $K$ user rates $\bm{R}\!=\!(R_1,\dots,R_K)$, with the coding
architecture, auxiliary-variable structure, and any time-sharing
choice fixed, the achievable region at design $\param$ is an
intersection of finitely many $\log\det$
inequalities~\cite{cover2006,elgamal2011},
\begin{equation}
\rateset(\param)\!=\!
\Big\{\bm{R}\!\in\!\R_+^K:\!\sum_{k\in T}\!R_k\!\le\!f_T(\param),\,
\forall T\!\in\!\indset\Big\},
\label{eq:rate_region}
\end{equation}
indexed by a family
$\indset\!\subseteq\!2^{[K]}\!\setminus\!\{\emptyset\}$, each rate
function $f_T$ being itself a linear conditional-MI
objective~\eqref{eq:utility},
\begin{equation}
f_T(\param)\!=\!\sum_{n=1}^{N_T}\alpha_{T,n}\,
\MI(V_{A_{T,n}};V_{B_{T,n}}\,|\,V_{C_{T,n}}).
\label{eq:fT}
\end{equation}
The two-user multiple-access pentagon is the case $N_T\!=\!1$,
$f_T\!=\!\MI(V_{X_T};V_Y\,|\,V_{X_{T^c}})$~\cite{tse1998multiaccess};
the Gaussian broadcast region with dirty-paper coding~\cite{caire2003},
the Han--Kobayashi inner bound for the interference
channel~\cite{hankobayashi1981}, and decode-/compress-and-forward
relay inner bounds~\cite{cover1979relay} likewise admit the
form~\eqref{eq:rate_region}--\eqref{eq:fT} once the coding order,
auxiliary-variable structure, time-sharing choice, and Gaussian
linear parametrization are fixed, the latter two with $N_T\!\geq\!2$
and coefficients $\alpha_{T,n}$ of either sign.

A notable \emph{non-linear} instance of~\eqref{eq:objective} is built
directly from these rate functions. At an operating point $\bm{R}$,
the achievability indicator $\bm{1}\{\bm{R}\!\in\!\rateset(\param)\}$
factors as
$\prod_{T\in\indset}\bm{1}\{\sum_{k\in T}R_k\!\le\!f_T(\param)\}$;
smoothing each factor by a temperature-$\tau$ sigmoid $\sigtau$ yields
the \emph{composite sigmoid surrogate} of the rate-region outage
indicator,
\begin{equation}
\rhotau(\param)\!=\!1\!-\!\!\prod_{T\in\indset}\!
\sigtau\!\Bigl(f_T(\param)\!-\!\!\sum_{k\in T}\!R_k\Bigr),
\label{eq:composite}
\end{equation}
whose $\Phi$ is the non-linear sigmoid-product composition and whose
auxiliary constants are the target rates $\bm{R}$ and the temperature
$\tau$. In a fading environment, with the channel realization $\bm{H}$
made an explicit random argument, the expectation
$\E_{\bm{H}}[\rhotau(\param,\bm{H})]$ is a differentiable proxy for
the rate-region outage probability
$\Pr_{\bm{H}}[\bm{R}\!\notin\!\rateset(\param,\bm{H})]$.

\begin{example}[Two-user multiple-access channel]
\label{ex:mac}
Two rate-bearing users with input roots $X_1,X_2$ feed a single
receiver node $Y$. Three conditional mutual informations on this DAG
generate the standard design objectives, namely the interference-free user
rates $\MI_1\!\triangleq\!\MI(V_{X_1};V_Y\,|\,V_{X_2})$
and $\MI_2\!\triangleq\!\MI(V_{X_2};V_Y\,|\,V_{X_1})$,
and the sum information
$\MI_{12}\!\triangleq\!\MI(V_{X_1},V_{X_2};V_Y)$:
\emph{(i)}~sum throughput is $U\!=\!\MI_{12}$;
\emph{(ii)}~a weighted sum-rate $U\!=\!\alpha_1\MI_1\!+\!\alpha_2\MI_2$
is a linear conditional-MI objective~\eqref{eq:utility};
\emph{(iii)}~proportional fairness $U\!=\!\log\MI_1\!+\!\log\MI_2$ is a
non-linear one~\eqref{eq:objective};
\emph{(iv)}~at an operating point $(R_1,R_2)$ the composite
surrogate~\eqref{eq:composite} of the multiple-access pentagon is
$\rhotau\!=\!1\!-\!\sigtau(\MI_1\!-\!R_1)\,\sigtau(\MI_2\!-\!R_2)\,
\sigtau(\MI_{12}\!-\!R_1\!-\!R_2)$.
All four are read off the one K-recursion graph, differing only in
the node subsets and in $\Phi$, and optimized by the single
update~\eqref{eq:pgd}.
\end{example}

\section{Conditional Differential Entropy as a Pipeline Specialization}
\label{app:entropy}

The conditional-MI evaluator of \cref{prop:cmi_schur} specializes
naturally to the conditional differential entropy. For nonempty
$A\!\subseteq\!\mathcal{V}$ and a (possibly empty) disjoint
$C\!\subseteq\!\mathcal{V}$, set $S\!=\!A\!\cup\!C$ and assume
$\bm{\Sigma}_{S,S}\!\succ\!\bm{0}$ as in \cref{prop:cmi_schur}. The
conditional covariance $\bm{\Sigma}_{A|C}$ of $V_A$ given $V_C$,
defined by the Schur complement of~\eqref{eq:schur} (reducing to
$\bm{\Sigma}_{A,A}$ for $C\!=\!\emptyset$), yields the conditional
differential entropy
\begin{equation}
h(V_A\,|\,V_C)
\!=\!\log\det(\pi e\,\bm{\Sigma}_{A|C})
\!=\!\log\det\bm{\Sigma}_{A|C}\!+\!d_A\log(\pi e),
\label{eq:cond_entropy}
\end{equation}
where $d_A\!=\!\sum_{a\in A}d_a$ is the total dimension of $V_A$.
Since $d_A$ is fixed by the support of $A$ and $\pi e$ is a global
constant, the additive term $d_A\log(\pi e)$ is independent of the
design parameter $\param$; the entropy gradient consequently reduces
to a pure log-determinant gradient,
\begin{equation}
\nabla_{\param^{*}}h(V_A\,|\,V_C)
\!=\!\nabla_{\param^{*}}\log\det\bm{\Sigma}_{A|C},
\label{eq:entropy_gradient}
\end{equation}
so the constant may be dropped throughout any gradient-based design
objective involving $h$.

The construction is the first half of~\eqref{eq:cmi_logdet}: one
block extraction, one Schur complement, and one log-determinant.
The conditional MI of~\cref{prop:cmi_schur} is recovered as the
entropy chain rule
\begin{equation}
\MI(V_A;V_B\,|\,V_C)\!=\!h(V_A\,|\,V_C)\!-\!h(V_A\,|\,V_{BC}),
\label{eq:entropy_chain}
\end{equation}
in which the constant $d_A\log(\pi e)$ cancels exactly.

The conditional-entropy primitive is of independent interest in
problems where the MI representation does not collapse to fewer
quantities and the entropy itself enters the formulation: Bayesian
experimental design and sensor placement (posterior covariance-volume
minimization via $\log\det\bm{\Sigma}_{A|C}$); Gaussian Slepian--Wolf
type distributed source coding (rate regions in terms of conditional
entropies); innovation rates of Gaussian processes on a DAG (Kalman
residual entropy); privacy and information-leakage formulations
where input distributions are co-designed (so that $h(V_X)$ itself
is a variable); and maximum-entropy design under structural
constraints. These applications fit the same K-recursion / Wirtinger
AD machinery through the specialization~\eqref{eq:cond_entropy} but
are outside the scope of the present paper and are left to a
follow-up work.

\section{Symbolic Companion: Closed-Form CMI and Wirtinger Gradients}
\label{app:symbolic_dag}

The K-recursion and the conditional-MI construction
of this paper are purely linear-algebraic and therefore admit an
exact \emph{symbolic} treatment. We have implemented this in a small
open-source companion library, \texttt{symbolic-dag}, which shares
the model and conventions of~\cref{sec:krec_review} but carries the
edge gains and noise covariances as \emph{opaque matrix symbols}.
Its role is analysis rather than optimization: while the numerical
pipeline of~\cref{sec:cmi,sec:experiments} searches the design
space, the symbolic layer supplies closed-form expressions,
optimality conditions, and exact conditional-independence
identities. Kept symbolic, the conditional mutual
information~\eqref{eq:cmi_logdet} admits the equivalent symmetric
rewriting
\begin{equation}
\begin{aligned}
\MI(V_A;V_B\,|\,V_C)
\!=\!&\log\det\bm{\Sigma}_{A|C}\!+\!\log\det\bm{\Sigma}_{B|C}\\
&-\!\log\det\bm{\Sigma}_{AB|C},
\end{aligned}
\label{eq:cmi_symmetric}
\end{equation}
which is returned as a closed form for an arbitrary
disjoint triple $(V_A,V_B,V_C)$ whose size is independent of the
node dimension (the scalar case is its $1\!\times\!1$
specialization), together with its Wirtinger gradient
$\partial\MI/\partial\bm{F}^{\herm}$ with respect to any gain or
precoder $\bm{F}$.

For the MIMO precoder model
$V_Y\!=\!\bm{H}\bm{F}V_{X_0}\!+\!V_{X_1}\!+\!V_N$, for instance,
the library returns the closed-form gradient
\begin{equation}
\frac{\partial}{\partial\bm{F}^{\herm}}
\MI(V_{X_0};V_Y\,|\,V_{X_1})
\!=\!\bm{H}^{\herm}\!\bigl(\bm{R}\!+\!\bm{H}\bm{F}\bm{\Sigma}_0\bm{F}^{\herm}\bm{H}^{\herm}\bigr)^{-1}\!\bm{H}\bm{F}\bm{\Sigma}_0,
\label{eq:symbolic_gradient}
\end{equation}
for any precoder $\bm{F}$, with no per-topology hand calculation.
Conditional
independence is similarly obtained as an exact matrix identity: the
cross conditional covariance $\bm{\Sigma}_{AB|C}$ reduces to the
zero matrix for all dimensions at once, supporting matrix-level
$d$-separation proofs.

The central technical point is that a
general-purpose computer-algebra system does not, out of the box,
operate at this matrix layer: in the SymPy base layer used here,
covariance symbols carry no Hermitian structure, the
block-determinant identities that collapse a conditional-MI
log-determinant are not applied, and the native matrix
differentiation of $\log\det$ of an opaque matrix symbol returns
the zero matrix. \texttt{symbolic-dag} therefore supplies two
small, rule-based engines. The first is a \emph{simplification
engine}: a strategic (phased) rewriter over a rule set of block
identities (Schur complement, Sylvester's determinant identity, the
matrix-determinant lemma, and Woodbury) together with structural
assumptions (Hermitian symmetry
$\bm{\Sigma}^{\herm}\!\to\!\bm{\Sigma}$, source independence). A
fixed rule \emph{ordering} is essential because the rules are not
confluent: a Woodbury expansion, for example, can pre-empt an
inverse cancellation that a $d$-separation proof needs. This engine
is what reduces $\bm{\Sigma}_{AB|C}$ to the zero matrix in the
identity quoted above. The second is a \emph{Wirtinger
differentiation engine}, the matrix-calculus analogue, applying
$d\log\det\bm{M}\!=\!\tr(\bm{M}^{-1}d\bm{M})$ and extracting the
coefficient of $d\bm{F}^{\herm}$ to return the closed-form gradient
$\partial\MI/\partial\bm{F}^{\herm}$ for any gain or precoder
$\bm{F}$, as in~\eqref{eq:symbolic_gradient}.

Both engines keep every quantity block-symbolic
(no expansion into scalar entries), so the results stay
dimension-independent. Compound objects such as conditional
covariances and Schur complements are stored as named intermediate
symbols and resolved lazily on demand, so each engine operates on a
compact set of high-level matrix terms rather than on a fully
expanded expression tree. The engines are themselves implemented as
a thin matrix-aware layer over a general-purpose symbolic-computation
system (SymPy in the present implementation), with
\texttt{symbolic-dag} supplying the block-identity rule sets, the
phased ordering, and the Wirtinger calculus discipline that the
underlying CAS lacks.

Every symbolic result is cross-validated against the numerical CMI
evaluator of~\cref{prop:cmi_schur} and its PyTorch autograd, and
\texttt{symbolic-dag} also renders its closed forms as \LaTeX{} for
the analyst. The library is available at
\url{https://github.com/wadayama/symbolic-dag}, together with its
verification suite and documentation.



\begin{thebibliography}{99}

\bibitem{cover2006}
T.~M. Cover and J.~A. Thomas, \emph{Elements of Information Theory},
2nd~ed. Hoboken, NJ: Wiley-Interscience, 2006.

\bibitem{elgamal2011}
A.~El~Gamal and Y.-H. Kim, \emph{Network Information Theory}.
Cambridge, U.K.: Cambridge Univ.\ Press, 2011.

\bibitem{telatar1999}
\.{I}.~E. Telatar,
``Capacity of multi-antenna {Gaussian} channels,''
\emph{Eur.\ Trans.\ Telecommun.}, vol.~10, no.~6, pp.~585--595, Nov.\ 1999.

\bibitem{palomar2006}
D.~P. Palomar and S.~Verd\'u,
``Gradient of mutual information in linear vector {Gaussian} channels,''
\emph{IEEE Trans.\ Inf.\ Theory}, vol.~52, no.~1, pp.~141--154, Jan.\ 2006.

\bibitem{tse1998multiaccess}
D.~N.~C.~Tse and S.~V.~Hanly,
``Multiaccess fading channels---{Part~I}:
Polymatroid structure, optimal resource allocation, and throughput
capacities,''
\emph{IEEE Trans.\ Inf.\ Theory}, vol.~44, no.~7, pp.~2796--2815, Nov.\ 1998.

\bibitem{paszke2019pytorch}
A.~Paszke \emph{et al.},
``{PyTorch}: An imperative style, high-performance deep learning library,''
in \emph{Proc.\ NeurIPS}, 2019, pp.~8024--8035.

\bibitem{hankobayashi1981}
T.~S.~Han and K.~Kobayashi,
``A new achievable rate region for the interference channel,''
\emph{IEEE Trans.\ Inf.\ Theory}, vol.~27, no.~1, pp.~49--60, Jan.\ 1981.

\bibitem{cover1979relay}
T.~M.~Cover and A.~A.~El~Gamal,
``Capacity theorems for the relay channel,''
\emph{IEEE Trans.\ Inf.\ Theory}, vol.~25, no.~5, pp.~572--584, Sep.\ 1979.

\bibitem{caire2003}
G.~Caire and S.~Shamai (Shitz),
``On the achievable throughput of a multiantenna {Gaussian} broadcast channel,''
\emph{IEEE Trans.\ Inf.\ Theory}, vol.~49, no.~7, pp.~1691--1706, Jul.\ 2003.

\bibitem{wadayama2026arxiv}
T.~Wadayama and S.~Na,
``Mutual information optimization via {K}-recursion and
automatic differentiation for linear {Gaussian} wireless networks,''
\emph{arXiv preprint arXiv:2606.06982}, Jun.\ 2026.

\bibitem{wadayama2026dag}
T.~Wadayama,
``Information gradient for directed acyclic graphs: A score-based framework
for end-to-end mutual information maximization,''
\emph{arXiv preprint arXiv:2601.01789}, Jan.\ 2026.

\bibitem{mine2018}
M.~I.~Belghazi \emph{et al.},
``Mutual information neural estimation,''
in \emph{Proc.\ Int.\ Conf.\ Mach.\ Learn.\ (ICML)}, 2018, pp.~531--540.

\bibitem{infonce2018}
A.~van~den~Oord, Y.~Li, and O.~Vinyals,
``Representation learning with contrastive predictive coding,''
\emph{arXiv preprint arXiv:1807.03748}, Jul.\ 2018.

\bibitem{baydin2018ad}
A.~G.~Baydin, B.~A.~Pearlmutter, A.~A.~Radul, and J.~M.~Siskind,
``Automatic differentiation in machine learning: A survey,''
\emph{J.\ Mach.\ Learn.\ Res.}, vol.~18, pp.~1--43, 2018.

\bibitem{schreier_scharf2010}
P.~J.~Schreier and L.~L.~Scharf,
\emph{Statistical Signal Processing of Complex-Valued Data: The Theory of
Improper and Noncircular Signals}.
Cambridge, U.K.: Cambridge Univ.\ Press, 2010.

\bibitem{shachter1989}
R.~D. Shachter and C.~R. Kenley,
``Gaussian influence diagrams,''
\emph{Manage.\ Sci.}, vol.~35, no.~5, pp.~527--550, May 1989.

\bibitem{geiger1994}
D.~Geiger and D.~Heckerman,
``Learning Gaussian networks,''
in \emph{Proc.\ 10th Conf.\ Uncertainty Artif.\ Intell.\ (UAI)}, 1994,
pp.~235--243.

\bibitem{sullivant2010}
S.~Sullivant, K.~Talaska, and J.~Draisma,
``Trek separation for Gaussian graphical models,''
\emph{Ann.\ Statist.}, vol.~38, no.~3, pp.~1665--1685, 2010.

\end{thebibliography}
\end{document}